\renewcommand\@biblabel[1]{}
\shorttitle{The deterministic Kermack-McKendrick model bounds the general stochastic epidemic} 
\begin{document}

\title{The deterministic Kermack-McKendrick model\\bounds the general stochastic epidemic} 

\authorone[The University of Liverpool]{Robert R. Wilkinson} 
\authortwo[The University of Nottingham]{Frank G. Ball}
\authorthree[The University of Liverpool]{Kieran J. Sharkey}

\addressone{Department of Mathematical Sciences, The University of Liverpool, Liverpool L69 7ZL, UK. Email address: robert.wilkinson@liverpool.ac.uk} 
\addresstwo{School of Mathematical Sciences, The University of Nottingham, University Park,
Nottingham  NG7 2RD, UK.  Email address: frank.ball@nottingham.ac.uk}
\addressthree{Department of Mathematical Sciences, The University of Liverpool, Liverpool L69 7ZL, UK. Email address: k.j.sharkey@liverpool.ac.uk}

\begin{abstract}
We prove that, for Poisson transmission and recovery processes, the classic Susceptible $\to$ Infected $\to$ Recovered (SIR) epidemic model of Kermack and McKendrick provides, for any given time $t>0$, a strict lower bound on the expected number of suscpetibles and a strict upper bound on the expected number of recoveries in the general stochastic SIR epidemic. The proof is based on the recent message passing representation of SIR epidemics applied to a complete graph. \end{abstract}

\keywords{General stochastic epidemic; deterministic general epidemic; SIR; Kermack-McKendrick; message passing; bound} 

\ams{92D30}{60J27; 60J22; 05C80} 

\section{Introduction}
\label{sec:intro}

The so-called general stochastic SIR (Susceptible $\to$ Infected $\to$ Recovered) epidemic (see Bailey 1975, Chapter 6) can be defined as follows. We have a finite population consisting of a set $\mathcal{V}$ of $N=|\mathcal{V}|$ discrete individuals. An individual, while infected, makes infectious contacts to any other given individual according to a Poisson process of rate $\beta$ (all such Poisson processes are independent). If a susceptible individual receives an infectious contact, it immediately becomes infected and remains so for an exponentially distributed period of time, with parameter $\gamma$ (and hence mean $\gamma^{-1}$), after which it ceases making contacts and becomes permanently recovered.  All such infectious periods are independent and also independent of the Poisson processes that govern infectious contacts.

The term general is now a misnomer, since far more complicated epidemic models have been proposed and analysed, but for ease of reference we keep with that terminology.  The model has its origin in McKendrick (1926).

For $t \ge 0$, let $X(t)$ and $Y(t)$ be discrete random variables representing the number of susceptible individuals at time $t$ and the number of infected individuals at time $t$, respectively. Then the continuous-time Markov chain $\{ (X(t),Y(t)):t \ge 0 \}$, with transition rates as in Table \ref{table2} with the constraint that $x+y \le N$ and $x, y \ge 0$, is consistent with the dynamics in the general stochastic SIR epidemic ($x$ and $y$ denote possible values of $X(t)$ and $Y(t)$, respectively). For $t \ge 0$, let the random variable $Z(t)=N-X(t)-Y(t)$ represent the number recovered at time $t$.
 
\begin{table}
	\begin{center}
 	\caption{Population-level transition rates for the general stochastic epidemic}
 	\begin{tabular}{cccc}
 		\hline\noalign{\smallskip}
 		From & To & Rate\\
 		\noalign{\smallskip}\hline\noalign{\smallskip}
 		$(x,y)$ & $(x-1,y+1)$ & $\beta x y$ \\
 		$(x,y)$ & $(x,y-1)$ & $\gamma y$ \\ 
 		\noalign{\smallskip}\hline
 		\end{tabular}
 		\label{table2}
 		\end{center}
\end{table}

We assume a pure (i.e.~non-random) initial state. Thus, we define $\mathcal{S}_0 \subset \mathcal{V}$ and $\mathcal{I}_0 = \mathcal{V} \setminus \mathcal{S}_0$ to be the set of initial susceptibles and the set of initial infectives respectively. In order to avoid triviality we assume that $|\mathcal{I}_0| \ge 1$ and $|\mathcal{S}_0| \ge 1$.  For ease of presentation of the proof, we make the stronger assumption that $|\mathcal{S}_0| \ge 2$, though Theorems~\ref{theorem1} and~\ref{theorem2} still hold when $|\mathcal{S}_0|=1$; see Remark~\ref{remark:S(0)=1} after Theorem~\ref{theorem1}.
 		
The deterministic general SIR epidemic (Kermack and McKendrick 1927), in the case where Poisson infection and recovery processes are assumed, is defined by the following system of ordinary differential equations, which we refer to henceforth as the deterministic SIR system:
\begin{eqnarray} \label{Sdet}
 		\dot {S}(t)&=&-\beta   S(t) I(t),  
 		\\ \label{Idet}
 		\dot {I}(t)&=&\beta  S(t) I(t) - \gamma I(t),  \\  
 		\dot {R}(t)&=& \gamma I(t). 
 		\label{kerm}
\end{eqnarray}
Here, and throughout the paper, we use `dot' notation to denote time derivatives.  This model has its origin in McKendrick (1914); Kermack and McKendrick (1927) treat a more general model in which the recovery and infection rates of an indidvidual may depend on the time since it was infected.

By setting $S(0)=|\mathcal{S}_0|, I(0)=|\mathcal{I}_0|,R(0)=0$ and matching the parameters, the deterministic SIR system approximates the general stochastic epidemic and becomes `exact' in the limit of large population size subject to suitable scaling of the initial conditions (Ethier and Kurtz 1986, Chapter 11; Andersson and Britton 2000, Chapter 5).
 		
When the recovery rate $\gamma=0$, the above models reduce to the simple, or SI (Susceptible $\to$ Infected), stochastic and deterministic epidemics (Bailey 1975, Chapter 5).  For these SI models, the deterministic model underestimates the expected number of susceptibles in the stochastic model at any given time $t>0$ (Bailey 1975, page 46; Ball and Donnelly 1987).  For the Markovian SIS (Susceptible $\to$ Infected $\to$ Susceptible) model, it has also been shown (Allen 2008; Simon and Kiss 2012) that its deterministic counterpart underestimates the expected number susceptible at any time.  A long-standing conjecture is that this comparison holds also for the general stochastic SIR epidemic.  The SI and SIS models are essentially one-dimensional, in that $X(t)+Y(t)=S(t)+I(t)=N$ for all $t \ge 0$.  For these models, the comparison between the stochastic and deterministic counterparts is proved easily by using the Kolmogorov forward equation to express $\dot{{\rm E}}[X(t)]$ as the expectation of a quadratic function of $X(t)$ and then using the fact that ${\rm E}\left[X(t)^2\right] > {\rm E}\left[X(t)\right]^2$ to compare $\dot{{\rm E}}[X(t)]$ with $\dot {S}(t)$.  This approach has proved to be unfruitful for the general stochastic SIR epidemic as that model is not one-dimensional.  
 		
 		The aim of the present paper is to use the recently developed message passing approach to general epidemics on
 		graphs (Karrer and Newman 2010; Wilkinson and Sharkey 2014) to prove the above-mentioned conjecture. This is achieved in section \ref{main}, where we show that for finite populations, the deterministic SIR system strictly underestimates (overestimates) the expected number susceptible (recovered) in the general stochastic epidemic, at any positive time point (Theorems~\ref{theorem1} and~\ref{theorem2}) and also at the end of the epidemic (Corollary~\ref{corollary1}). In section~\ref{disc}, we give some brief concluding comments.

 		\section{The deterministic general SIR epidemic provides rigorous bounds for the general stochastic SIR epidemic}
 		\label{main}
 		
 		
 		We show that the deterministic SIR system, expressed as (\ref{Sdet})-(\ref{kerm}), provides a rigorous lower bound on the expected number susceptible at all time points in the general stochastic SIR epidemic (Table 1). We state this main result in the following theorem.
 		
 		\begin{theorem} For the same initial conditions and parameters,
 			\begin{eqnarray}  \nonumber
 			{\rm E}[X(t)] & > & S(t) \qquad \mbox{for all  } t>0.  
 			\end{eqnarray}
 			\label{theorem1}
 			\end{theorem}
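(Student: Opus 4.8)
The plan is to apply the message passing construction of Karrer and Newman (2010) and Wilkinson and Sharkey (2014) to the complete graph $K_N$ on the vertex set $\mathcal{V}$, with transmission rate $\beta$ on every directed edge and recovery rate $\gamma$, and to combine two facts. First, for any contact network the message passing equations yield, for each individual $j$, a quantity $\tilde{s}_j(t)$ that is a rigorous lower bound on the probability that $j$ is susceptible at time $t$, with equality when the network is a tree. Second --- and this is the reduction that makes the complete graph special --- when the vertices are partitioned into the initially susceptible class and the initially infected class, the symmetry of $K_N$ collapses the message passing system to a low-dimensional system whose susceptible aggregate $S(t) := \sum_{j} \tilde{s}_j(t)$ satisfies exactly the deterministic SIR equations (\ref{Sdet})--(\ref{kerm}), with the matching initial data $S(0)=|\mathcal{S}_0|$, $I(0)=|\mathcal{I}_0|$, $R(0)=0$. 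Granting these, since ${\rm E}[X(t)] = \sum_{j} {\rm P}(j \text{ susceptible at } t) \ge \sum_{j} \tilde{s}_j(t) = S(t)$, the non-strict bound follows immediately.

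To carry out the reduction I would introduce the cavity messages $H_{i \to j}(t)$, the probability that $i$ has not transmitted to $j$ by time $t$ in the modified epidemic in which $j$ is prevented from making contacts, together with the product formula $\tilde{s}_j(t) = z_j \prod_{i \ne j} H_{i \to j}(t)$, where $z_j \in \{0,1\}$ records $j$'s initial state. By symmetry there are only two distinct messages, one emanating from an initially susceptible source and one from an initially infected source, and the self-consistent differential equations for these (hazard $\beta$ while the source is infectious, recovery rate $\gamma$) reduce, after differentiating the aggregates and collecting terms, to the mass-action system. Verifying that the transmission contributions assemble precisely into $\beta S(t) I(t)$ is the main bookkeeping step here; the initial conditions are immediate.

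The delicate part, and the main obstacle, is upgrading this to the strict inequality ${\rm E}[X(t)] > S(t)$ for every $t>0$. Strictness must come from the fact that $K_N$ is not a tree: message passing treats the infectious contacts arriving at a node from its neighbours as independent, whereas in the true epidemic they are correlated through shared infection histories. Concretely, I would isolate a triangle $\{j,\ell,k\}$ with $j,\ell \in \mathcal{S}_0$ and $k \in \mathcal{I}_0$ --- such a triangle exists precisely because the hypotheses give $|\mathcal{S}_0| \ge 2$ and $|\mathcal{I}_0| \ge 1$, which is what motivates the standing assumption --- and show that the route by which $\ell$ can transmit to $j$ is positively correlated with $k$'s transmission to $j$, so that message passing strictly over-estimates the infection pressure on $j$ for all $t>0$.

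To make this rigorous I would compare ${\rm P}(j \text{ susceptible at } t)$ with $\tilde{s}_j(t)$ directly, either through a coupling of the true epidemic with the \emph{unfolded} (computation-tree) epidemic that message passing computes --- in which the repeated appearance of individuals along cycles permits infection to be double-counted --- or through an inclusion--exclusion in which the lowest-order discrepancy term is shown to be strictly positive. I expect the former to be cleanest: exhibiting an event of strictly positive probability on which the true node $j$ escapes infection while its image in the unfolded process does not, valid for arbitrarily small $t>0$ and persisting thereafter. Establishing that this strict gap holds simultaneously for \emph{all} $t>0$ (not merely for small or large $t$), and that summing the per-node gaps cannot be undone by cancellation, is the crux of the argument.
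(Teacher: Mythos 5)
Your proposal correctly identifies the first half of the paper's strategy: apply the message passing bound of Karrer--Newman and Wilkinson--Sharkey to the complete graph, use symmetry to collapse the messages to two types (one for initially susceptible sources, one for initially infected sources), and conclude ${\rm E}[X(t)] \ge S_{\text{mes}}(t)$ where $S_{\text{mes}}(t)$ is the message passing aggregate. This is exactly the paper's Lemma~\ref{lemma1}.

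However, there is a genuine gap at the centre of your argument: the claim that the collapsed message passing system ``satisfies exactly the deterministic SIR equations (\ref{Sdet})--(\ref{kerm})'' is false, and the bookkeeping step you defer would not go through. On the complete graph the message passing aggregate is $S_{\text{mes}}(t)=|\mathcal{S}_0| F_1(t)^{|\mathcal{S}_0|-1}F_2(t)^{|\mathcal{I}_0|}$ with $F_2(t)=(\beta {\rm e}^{-(\beta+\gamma)t}+\gamma)/(\beta+\gamma)$, whereas the Kermack--McKendrick solution factorises as $S(t)=|\mathcal{S}_0| S_1(t)^{|\mathcal{S}_0|}S_2(t)^{|\mathcal{I}_0|}$ with $S_2(t)={\rm e}^{(\beta/\gamma)({\rm e}^{-\gamma t}-1)}$; already $F_2(\infty)=\gamma/(\beta+\gamma) \ne {\rm e}^{-\beta/\gamma}=S_2(\infty)$, so the two systems genuinely differ. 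Heuristically, message passing respects the fact that the cumulative infection pressure a single neighbour can exert is bounded (it transmits at most once before recovering), while the mass-action term $\beta S(t)I(t)$ does not, so the deterministic system infects strictly faster. Because of this misidentification, your plan for strictness attacks the wrong inequality: you try to open a strict gap in ${\rm E}[X(t)] \ge S_{\text{mes}}(t)$ via the non-tree structure of $K_N$ (a delicate coupling/unfolding argument that you yourself flag as the crux and do not complete), whereas the paper needs only the non-strict version of that inequality and obtains strictness from the \emph{other} comparison, $S_{\text{mes}}(t) > S(t)$ for all $t>0$. That comparison is proved by an elementary ODE argument: writing $\dot S_2 = -\gamma S_2\ln S_2 - \beta S_2$ versus $\dot F_2 = -\gamma(F_2-1)-\beta F_2$ (and similarly for $S_1$ versus $F_1$) and using $0 > x\ln x > x-1$ on $(0,1)$ to show the trajectories cannot cross. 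To repair your proof you would need to replace the false identification with this (or an equivalent) comparison between the message passing system and the deterministic system; the strict tree-versus-loopy-graph gap you propose is not needed and would be considerably harder to establish uniformly in $t$.
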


 			\begin{proof} From Lemmas \ref{lemma1} and \ref{lemma2} (below),
 				\begin{equation}  \label{SMDorder}
 				{\rm E}[X(t)] \ge S_{\text{mes}}(t)>S(t) \qquad \mbox{for all  } t>0,  
 				\end{equation}
 				where $S_{\text{mes}}(t)$ is defined by \eqref{original} below.
 				\end{proof}

 				Let $T=\inf\{t>0:Y(t)=0\}$ denote the duration of the general stochastic SIR epidemic, so $|\mathcal{S}_0|-X(T)$ is the size of the epidemic, i.e.~the number of initial susceptibles that are infected by the epidemic. The following corollary shows that the expected size of the general stochastic epidemic is strictly less than the size $S(0)-S(\infty)$ of its deterministic counterpart. 
 				
 				\begin{corollary} For the same initial conditions and parameters,
 					\begin{eqnarray}  \nonumber
 					{\rm E}[X(T)] & > & S(\infty).  
 					\end{eqnarray}
 					\label{corollary1}
 					\end{corollary}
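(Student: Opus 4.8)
The plan is to realise $X(T)$ as the almost-sure limit of $X(t)$ as $t\to\infty$ and then transfer the pointwise bound of Theorem~\ref{theorem1} to the terminal time. First I would observe that, since the population is finite, there are at most $N$ recovery events and each occurs after an almost-surely finite (exponential) waiting time, so the epidemic terminates at an almost-surely finite time $T$. Moreover $X(\cdot)$ is non-increasing and is constant on $[T,\infty)$ because $Y(T)=0$ forbids further infections; hence $X(t)\downarrow X(T)$ almost surely. Since $0\le X(t)\le N$, bounded convergence then gives $\lim_{t\to\infty}{\rm E}[X(t)]={\rm E}[X(T)]$.

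Next I would pass to the limit in the chain \eqref{SMDorder} established in the proof of Theorem~\ref{theorem1}, namely ${\rm E}[X(t)]\ge S_{\text{mes}}(t)>S(t)$. Both $S_{\text{mes}}(t)$ and $S(t)$ are non-increasing and bounded below, hence converge to limits $S_{\text{mes}}(\infty)$ and $S(\infty)$, and so
\begin{equation} \nonumber
{\rm E}[X(T)]=\lim_{t\to\infty}{\rm E}[X(t)]\ \ge\ S_{\text{mes}}(\infty)\ \ge\ S(\infty).
\end{equation}
It then remains only to upgrade the last inequality to a strict one.

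The hard part is exactly this upgrade, because a strict pointwise inequality for finite $t$ need not survive a limit, so $S_{\text{mes}}(t)>S(t)$ cannot simply be invoked at $t=\infty$. Instead I would compare the two limiting (final-size) relations directly. Letting $t\to\infty$ in the message-passing equations on the complete graph, each per-edge transmission factor converges to the probability that an infectious neighbour recovers before transmitting across the edge, namely ${\rm E}[e^{-\beta L}]=\gamma/(\beta+\gamma)$ for an infectious period $L\sim{\rm Exp}(\gamma)$, whereas the deterministic final-size relation $S(\infty)=S(0)\exp\!\big(-(\beta/\gamma)(N-S(\infty))\big)$ carries the factor $e^{-\beta/\gamma}=e^{-\beta{\rm E}[L]}$. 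By strict convexity of $u\mapsto e^{-\beta u}$ and Jensen's inequality applied to the non-degenerate $L$, we have $\gamma/(\beta+\gamma)={\rm E}[e^{-\beta L}]>e^{-\beta{\rm E}[L]}=e^{-\beta/\gamma}$; feeding this strictly larger escape factor into the monotone final-size maps forces $S_{\text{mes}}(\infty)>S(\infty)$, with a gap bounded away from zero. Combined with the display above this yields ${\rm E}[X(T)]>S(\infty)$. I expect the delicate step to be verifying that the strict per-edge discrepancy propagates through the self-consistent final-size equations to a strict ordering of their fixed points, which requires identifying the relevant fixed point of each monotone map as the decreasing limit of the corresponding trajectory and using monotonicity to exclude equality in the limit.
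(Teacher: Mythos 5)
Your proposal follows essentially the same route as the paper: dominated (bounded) convergence gives ${\rm E}[X(T)]=\lim_{t\to\infty}{\rm E}[X(t)]$, the limit of \eqref{SMDorder} gives the non-strict inequality, and strictness is recovered by showing $S_{\text{mes}}(\infty)>S(\infty)$ via the comparison $F_2(\infty)=\gamma/(\gamma+\beta)>{\rm e}^{-\beta/\gamma}=S_2(\infty)$, your Jensen's-inequality reading of which is exactly the paper's Remark~\ref{remark:Sinfty} in probabilistic clothing. The one step you leave open --- propagating the strict per-edge inequality to a strict ordering of the final sizes --- is closed in the paper without any fixed-point analysis: letting $t\to\infty$ in~\eqref{I} gives $S(\infty)>0$, hence $S_1(\infty)>0$ and $F_1(\infty)>0$, so comparing \eqref{purestate} and \eqref{laS} factor by factor at $t=\infty$ immediately yields $S_{\text{mes}}(\infty)>S(\infty)$.
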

 					
 					\begin{proof}
 						The duration $T$ is bounded above by the sum of the infectious periods of all individuals infected during the epidemic, which is almost surely finite, so $X(T)=\lim_{t \to \infty}X(t)$ almost surely and the dominated convergence theorem yields that ${\rm E}[X(T)]=\lim_{t \to \infty}{\rm E}[X(t)]$.  Letting $t \to \infty$ in Theorem~\ref{theorem1} implies immediately that ${\rm E}[X(T)]\ge S(\infty)$.  
 						Moreover, in view of~\eqref{SMDorder}, the inequality is strict, since $S_{\text{mes}}(\infty)>S(\infty)$, see Remark~\ref{remark:Sinfty} after Lemma~\ref{lemma2}.
 						\end{proof} 
 						
 						In the following subsections we present two systems which approximate the general stochastic SIR epidemic, and show that, for any $t>0$, they increasingly underestimate the expected number of susceptibles ${\rm E}[X(t)]$ and increasingly overestimate the expected number recovered ${\rm E}[Z(t)]$. The inequalities are stated in Lemmas \ref{lemma1} and \ref{lemma2}, and Theorem \ref{theorem2}.

 						\subsection{The message passing system}
 						
 						Here we form an approximating system for the general stochastic SIR epidemic from the message passing equations of Karrer and Newman (2010), where we make use of the generalisation of initial conditions provided by Wilkinson and Sharkey (2014). It has been shown that a message passing approach can be applied to a general class of SIR epidemics on finite graphs (Karrer and Newman 2010). If the underlying graph is a tree or forest, then the equations give results which match expectations of the stochastic process exactly; otherwise they give a lower bound on the probability of any given individual being susceptible at time $t$, and hence also a lower bound on the expected number susceptible at time $t$. 
 						
 						The general stochastic SIR epidemic is equivalent to a Markovian SIR epidemic on a complete graph. Thus, the message passing equations for the general stochastic SIR epidemic become (Wilkinson and Sharkey 2014; and see appendix A):
 						\begin{equation} S_{\text{mes}}(t)= \sum_{i \in \mathcal{V}} z_i \prod_{j \neq i} F^{i \leftarrow j}(t), \label{original}  \end{equation}
 						where $z_i=1$ if $i \in \mathcal{V}$ is initially susceptible and is zero otherwise, and
 						\begin{equation} \label{firstF} F^{i \leftarrow j}(t)=1 -  \int_0^t \beta {\rm e}^{-(\beta + \gamma)\tau} \Big[ 1  -   z_j \prod_{k \neq i,j} F^{j \leftarrow k}(t-\tau)   \Big] \mbox{d} \tau \quad (i,j \in \mathcal{V}, i \ne j). \end{equation}
 						The function $S_{\text{mes}}(t)$ is constructed to approximate the expected number of susceptibles at time $t$, while $F^{i \leftarrow j}(t)$ approximates the probability that $i \in \mathcal{V}$ (in the cavity state; see appendix A) does not receive any infectious contacts from $j \in \mathcal{V}$ by time $t$.
 						
 						Note that if $i \notin \mathcal{S}_0$, then $z_i=0$ and (\ref{original}) can be expressed as
 						\[
 						S_{\text{mes}}(t)=\sum_{i \in \mathcal{S}_0} \prod_{j \neq i} F^{i \leftarrow j}(t).
 						\]
 						By symmetry, $F^{i \leftarrow j}(t)=F^{i' \leftarrow j'}(t)$ for any $i,j,i',j' \in \mathcal{S}_0$ with $i \ne j$ and $i' \ne j'$ (let $F_1(t)$ denote this quantity); and, if $j \in \mathcal{I}_0$ so $z_j=0$, then $F^{i \leftarrow j}(t)=1 - \int_0^t \beta {\rm e}^{-(\beta + \gamma)\tau} \mbox{d} \tau$ for any $i \in \mathcal{S}_0$ (let $F_2(t)$ denote this quantity). Thus, the assumed pure initial system state and the symmetry in (\ref{firstF}) allow us to simplify (\ref{original}) to:
 						\begin{eqnarray} 
 						S_{\text{mes}}(t)&=&|\mathcal{S}_0| F_1(t)^{|\mathcal{S}_0|-1}F_2(t)^{|\mathcal{I}_0|},  
 						\label{purestate}
 						\end{eqnarray}
 						where 
 						\begin{eqnarray} \label{F1}
 						F_1(t)& =& 1 - \int_0^t \beta {\rm e}^{-(\beta + \gamma)\tau} \left[ 1 -    F_1(t-\tau)^{| \mathcal{S}_0 |-2}   F_2(t-\tau)^{| \mathcal{I}_0 |}   \right] \mbox{d} \tau, \\
 						F_2(t)&=&\frac{\beta {\rm e}^{-(\beta + \gamma)t}+ \gamma}{\beta + \gamma}. 
 						\label{F2}
 						\end{eqnarray}
 						The message passing system is then completed by
 						\begin{equation} 
 						I_{\text{mes}}(t)= N-S_{\text{mes}}(t)-R_{\text{mes}}(t), \qquad 
 						\dot{R}_{\text{mes}}(t)= \gamma I_{\text{mes}}(t),
 						\label{message}
 						\end{equation}
 						with $I_{\text{mes}}(0)=|\mathcal{I}_0|$ and $R_{\text{mes}}(0)=0$.
 						
 						\begin{lemma}
 							For the same initial conditions and parameters,
 							\begin{eqnarray}
 							\label{XgtSmes}
 							{\rm E}[X(t)] & \ge & S_{\text{mes}}(t) \qquad \mbox{for all  } t>0.  
 							\end{eqnarray}
 							\label{lemma1}
 							\end{lemma}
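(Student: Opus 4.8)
The plan is to prove the lemma at the level of individuals and then sum. Let $p_i(t)$ denote the probability that individual $i$ is susceptible at time $t$ in the general stochastic epidemic on $\mathcal{V}$. By linearity of expectation ${\rm E}[X(t)]=\sum_{i \in \mathcal{V}} p_i(t)$, whereas by definition $S_{\text{mes}}(t)=\sum_{i \in \mathcal{V}} z_i \prod_{j \ne i} F^{i \leftarrow j}(t)$, so it is enough to show, for every $i \in \mathcal{V}$ and every $t>0$,
\[
p_i(t) \;\ge\; z_i \prod_{j \ne i} F^{i \leftarrow j}(t).
\]
This per-individual inequality is exactly the message passing lower bound of Karrer and Newman (2010), in the generalised-initial-condition form of Wilkinson and Sharkey (2014), applied to the complete graph $K_N$ on which the general stochastic epidemic lives; summing it over $i$ gives the lemma. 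Given that result (restated in appendix A), the reduction is immediate, so the substance lies in establishing the per-individual bound.

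To do that I would work with the cavity construction underlying (\ref{firstF}). Fixing the target $i$ and placing it in the cavity state, so that $i$ can receive infectious contacts but never makes any, $F^{i \leftarrow j}(t)$ is the probability that $i$ receives no contact from $j$ by time $t$; the recursion (\ref{firstF}) then follows by conditioning on the epoch of a putative first contact from $j$ to $i$ and on whether $j$ has itself become infected, through the reduced population $\mathcal{V}\setminus\{i\}$, before that epoch. Since $i$ never transmits, deleting it cannot open up new infection routes, and the event $\{i \text{ susceptible at } t\}$ factorises over the neighbours of $i$ precisely when the graph of possible contacts is a tree or forest. The point is that $K_N$ (with $N\ge 3$ here, since $|\mathcal{S}_0|\ge 2$ and $|\mathcal{I}_0|\ge 1$) is not a tree: the infection channels reaching $i$ through distinct neighbours $j,j'$ pass through shared intermediate individuals, so treating them as independent, which is what the product $\prod_{j\ne i}F^{i\leftarrow j}(t)$ does, over-estimates the probability that at least one channel delivers infection, and hence under-estimates $p_i(t)$.

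The rigorous route I would take for the inequality is a coupling between the epidemic on $K_N$ and the epidemic on the computation tree rooted at $i$, namely the tree of non-backtracking contact sequences issuing from $i$, each of whose nodes copies a genuine individual and inherits its initial infective or susceptible status. On this tree the contact graph really is a tree, so the message passing equations are exact and $z_i\prod_{j\ne i}F^{i\leftarrow j}(t)$ equals the probability that the root is susceptible at time $t$ in the tree epidemic. I would build both epidemics from a common pool of Poisson contact clocks and exponential infectious periods and argue that every transmission chain that infects $i$ in $K_N$ is self-avoiding, hence lifts to a chain infecting the root in the tree; because a single individual in $K_N$ is infected at most once while its several copies in the tree may each be infected, the tree can only spread infection more readily. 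Under such a coupling $\{i \text{ infected in } K_N\}\subseteq\{\text{root infected in tree}\}$, which gives $p_i(t)\ge z_i\prod_{j\ne i}F^{i\leftarrow j}(t)$.

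The main obstacle is making this domination genuinely monotone: one must set up the coupling so that lifting a real infection chain to the computation tree never fails, and so that the additional, independent infection attempts created by the multiple tree-copies of one individual can only enlarge the infective pressure at the root rather than disrupt the correspondence. Equivalently, one has to exclude the possibility that a short cycle in $K_N$ lets infection reach $i$ along a route with no non-backtracking image in the tree. Once this monotone coupling is in place the per-individual bound, and hence the lemma, follows.
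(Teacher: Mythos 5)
Your reduction of the lemma to the per-individual inequality $p_i(t)\ge z_i\prod_{j\ne i}F^{i\leftarrow j}(t)$, followed by summation over $i$, is exactly how the paper proceeds (see \eqref{PS_ibound} and the final paragraph of Appendix A). Where you diverge is in how that per-individual bound is to be established. The paper splits it into two inequalities: first, $P_{S_i}(t)\ge z_i\prod_{j}H^{i\leftarrow j}(t)$, where $H^{i\leftarrow j}$ is the \emph{true} cavity probability that $j$ never contacts $i$; second, $H^{i\leftarrow j}(t)\ge F^{i\leftarrow j}(t)$. Both rest on the positive association of the events ``no contact from $j$ to the cavity individual $i$'' across different $j$ (the very correlation-through-shared-intermediaries observation you make in your second paragraph and then set aside), and the second is completed by the monotone iteration $F_{(0)}=H\ge F_{(1)}\ge F_{(2)}\ge\cdots\downarrow F$ of \eqref{h_eqn} and \eqref{fin}. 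You instead propose a single-shot domination of the epidemic on $K_N$ by an epidemic on the infinite non-backtracking computation tree rooted at $i$, on which message passing is exact.

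The gap is that the computation-tree coupling, which carries the entire weight of your argument, is never constructed; you correctly identify it as ``the main obstacle'' and then stop. The difficulty is not cosmetic. To realise $\{i\hbox{ infected in }K_N\}\subseteq\{\hbox{root infected in tree}\}$ from ``a common pool of Poisson contact clocks and exponential infectious periods'' you must decide which of the infinitely many tree copies of a real individual $j$ inherits $j$'s actual clocks and infectious period and which receive fresh independent ones; the natural choice (the copy lying on the realised infection chain to $i$) is itself random and determined by the outcome, so the assignment cannot be made measurably in advance without a more careful inductive construction, and a naive deterministic assignment breaks either the independence needed for exactness on the tree or the pathwise domination. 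There is also a secondary issue you gloss over: exactness of $z_i\prod_{j\ne i}F^{i\leftarrow j}(t)$ on the \emph{infinite} computation tree needs an argument (e.g.\ via finite-depth truncations, which is essentially the paper's iteration $F_{(m)}$). A domination of this kind can be made rigorous, but as written the central inequality is asserted rather than proved. The shorter repair is to develop the positive-correlation route you already sketched, which is the paper's actual mechanism: establish $P_{S_i}(t)\ge z_i\prod_j H^{i\leftarrow j}(t)$ and the sub-solution property \eqref{h_eqn} by association of increasing events in the underlying independent variables, then deduce $H\ge F$ by the monotone iteration.
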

 							\begin{proof}      Lemma \ref{lemma1} follows directly from the extension of the arguments of Karrer and Newman (2010, section III), provided by Wilkinson and Sharkey (2014, section III) (for further details, see appendix A). 
 								\end{proof}

 									\subsection{The deterministic SIR system}
 									
 									Here we consider the deterministic SIR system given by (\ref{Sdet})-(\ref{kerm}). To proceed, we first reformulate these equations as 
 									\begin{eqnarray} \label{s} 
 									\dot{ s}(t)&=&-\beta s(t) \Big[ |\mathcal{S}_0| i(t) + |\mathcal{I}_0| {\rm e}^{- \gamma t} \Big], 
 									\\  \nonumber  \label{i} 
 									\dot{ i}(t)&=& \beta s(t) \Big[ |\mathcal{S}_0|  i(t) + |\mathcal{I}_0| {\rm e}^{- \gamma t} \Big] - \gamma i(t),  
 									\\   \nonumber
 									\dot{ r}(t)&=& \gamma i(t),  
 									\label{individual}
 									\end{eqnarray}
 									such that, setting $s(0)=1, i(0)=r(0)=0$, we have
 									\begin{eqnarray}  \label{Si2}
 									S(t)&=&|\mathcal{S}_0| s(t), \\   \label{Ii2}
 									I(t) &=& |\mathcal{S}_0| i(t) + |\mathcal{I}_0|{\rm e}^{- \gamma t}, \\  \nonumber
 									R(t)&=&|\mathcal{S}_0| r(t) + |\mathcal{I}_0|(1- {\rm e}^{- \gamma t}).
 									\label{Si}
 									\end{eqnarray}
 									In this form, $s(t), i(t)$ and $r(t)$ represent the state of the initially-susceptible population (i.e.~the fraction that are respectively susceptible, infected and recovered at time $t$), with the initially-infected population becoming recovered at rate $\gamma$.

 									\begin{lemma}
 										For the same initial conditions and parameters,
 										\begin{eqnarray}   \nonumber
 										S_{\text{mes}}(t) & > & S(t) \qquad \mbox{for all  } t>0.  
 										\end{eqnarray}
 										\label{lemma2}
 										\end{lemma}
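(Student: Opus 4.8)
The plan is to prove the equivalent per-capita inequality $F_1(t)^{|\mathcal{S}_0|-1}F_2(t)^{|\mathcal{I}_0|}>s(t)$ for all $t>0$, which yields the lemma after multiplication by $|\mathcal{S}_0|$ in view of \eqref{purestate} and \eqref{Si2}. First I would put the deterministic survivor probability into product form: integrating the $s$-equation \eqref{s} and separating its force of infection into the initially-susceptible and initially-infective contributions gives
\begin{equation*}
s(t)=g_1(t)^{|\mathcal{S}_0|}g_2(t)^{|\mathcal{I}_0|},\qquad g_1(t)={\rm e}^{-\beta\int_0^t i(\tau)\,\mathrm{d}\tau},\quad g_2(t)={\rm e}^{-\beta\int_0^t {\rm e}^{-\gamma\tau}\,\mathrm{d}\tau},
\end{equation*}
so that $g_1$ and $g_2$ are the deterministic per-source avoidance factors associated respectively with a single initial susceptible and a single initial infective. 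The target then becomes to compare these factor-by-factor with $F_1$ and $F_2$, exploiting the bookkeeping asymmetry that message passing omits the focal individual as a source (exponent $|\mathcal{S}_0|-1$) whereas the mean-field model does not (exponent $|\mathcal{S}_0|$).

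The clean half is the strict inequality $F_2(t)>g_2(t)$ for $t>0$. Conceptually this is Jensen's inequality: by \eqref{F2}, $F_2(t)$ is exactly ${\rm E}[{\rm e}^{-\beta\min(Q,t)}]$ for an $\mathrm{Exp}(\gamma)$ infectious period $Q$, whereas $g_2(t)={\rm e}^{-\beta{\rm E}[\min(Q,t)]}$, and $x\mapsto{\rm e}^{-\beta x}$ is strictly convex with $\min(Q,t)$ non-degenerate for $t>0$. Concretely it can be read off the closed forms: with $\dot F_2=-\beta{\rm e}^{-(\beta+\gamma)t}$, $\dot g_2=-\beta{\rm e}^{-\gamma t}g_2$, and the elementary bound $g_2(t)={\rm e}^{-(\beta/\gamma)(1-{\rm e}^{-\gamma t})}\ge{\rm e}^{-\beta t}$ (from $1-{\rm e}^{-\gamma t}\le\gamma t$), one gets $\tfrac{\mathrm{d}}{\mathrm{d}t}(F_2-g_2)=\beta{\rm e}^{-\gamma t}\bigl(g_2-{\rm e}^{-\beta t}\bigr)\ge0$, and hence $F_2-g_2>0$ for $t>0$.

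The main obstacle is the companion inequality $F_1(t)\ge g_1(t)$ for all $t\ge0$. Differentiating the convolution equation \eqref{F1} converts it to the ODE
\begin{equation*}
\dot F_1(t)=\gamma-(\beta+\gamma)F_1(t)+\beta F_1(t)^{|\mathcal{S}_0|-2}F_2(t)^{|\mathcal{I}_0|},\qquad F_1(0)=1,
\end{equation*}
whereas $\dot g_1=-\beta i\,g_1$ with $g_1(0)=1$ couples to the deterministic $i$. Here a single Jensen step does \emph{not} suffice, since $F_1$ and $g_1$ are the respective fixed points of the two \emph{coupled} systems and the infection-time law of a susceptible source differs between them (in the message-passing system sources remain susceptible longer, which is precisely the effect under study); the comparison is thus genuinely self-referential. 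Indeed one checks that $F_1$ and $g_1$ agree to second order at $t=0$, so the inequality is global rather than infinitesimal, and a first-crossing argument on the scalar ratio fails because the internal states of the two systems need not coincide at a putative first crossing. I would instead establish it by a monotone comparison of the two coupled systems, exhibiting the message-passing system as a supersolution of the deterministic one, e.g.\ via a monotone Picard iteration in which each message-passing kernel dominates its mean-field counterpart by the same convexity bound that drives $F_2>g_2$.

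Finally I would assemble the pieces. Writing $a=|\mathcal{S}_0|-1\ge1$ and $b=|\mathcal{I}_0|\ge1$, and using $0<g_1(t)\le1$ (as $i\ge0$), the chain
\begin{equation*}
F_1^{a}F_2^{b}\;\ge\;g_1^{a}F_2^{b}\;>\;g_1^{a}g_2^{b}\;\ge\;g_1^{a+1}g_2^{b}\;=\;s
\end{equation*}
holds for $t>0$: the first step uses $F_1\ge g_1\ge0$, the second (strict) step uses $F_2>g_2\ge0$ together with $g_1^{a}>0$, and the third uses $g_1\le1$ to absorb the extra deterministic factor created by the self-inclusion asymmetry. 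Thus $F_1^{|\mathcal{S}_0|-1}F_2^{|\mathcal{I}_0|}>s$, and multiplying by $|\mathcal{S}_0|$ gives $S_{\text{mes}}(t)>S(t)$ for all $t>0$, with strictness inherited entirely from the Jensen gap $F_2>g_2$.
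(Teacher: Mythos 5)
Your decomposition is exactly the paper's: your $g_1$ and $g_2$ are the paper's $S_1$ and $S_2$ (the solutions of \eqref{S1} and \eqref{S2}), your Jensen/monotonicity argument for $F_2>g_2$ is a correct variant of the paper's comparison of \eqref{S2deriv} with \eqref{F2deriv}, and your final assembly chain is the same as the paper's comparison of \eqref{purestate} with \eqref{laS}. The problem is that the step you yourself identify as the main obstacle, namely $F_1(t)\ge g_1(t)$, is precisely the step you do not prove. A `monotone Picard iteration in which each message-passing kernel dominates its mean-field counterpart' is a plan, not an argument: $g_1$ is not naturally the fixed point of a convolution iteration of the same shape as \eqref{F1}, and the domination you would need at each stage is exactly the self-referential comparison you concede is the difficulty. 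As it stands, the proposal establishes only the $F_2>S_2$ half of the lemma.

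Moreover, your diagnosis that a first-crossing argument fails because ``the internal states of the two systems need not coincide at a putative first crossing'' is what leads you astray: the paper makes exactly that first-crossing argument work. The missing idea is the classical Kermack--McKendrick first integral. Dividing \eqref{Idet} by \eqref{Sdet} gives $I(t)=N-S(t)+(\gamma/\beta)\ln\bigl(S(t)/|\mathcal{S}_0|\bigr)$, i.e.~\eqref{I}, which eliminates $i(t)$ as an independent internal state and turns $\dot{S_1}$ into an explicit function of $S_1$ and the closed-form $S_2$ alone, namely \eqref{S1deriv}. Comparing this with the ODE \eqref{F1deriv} for $F_1$, whose extra state $F_2$ is likewise explicit, the only quantities that need to coincide at a putative first crossing are $F_1$ and $S_1$ themselves; the elementary inequalities $0>x\ln x>x-1$ for $x\in(0,1)$, together with $F_2>S_2$ and $|\mathcal{S}_0|\ge 2$, then give $\dot{S_1}<\dot{F_1}$ at any such crossing, so no crossing occurs and $F_1>S_1$ for all $t>0$. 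Without this reduction (or a fully worked-out substitute for it) your proof is incomplete at its central step.
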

 										\begin{proof} By analogy with (\ref{purestate}), we first reformulate (\ref{Si2}) in terms of two quantities, $S_1 (t)$ and $S_2(t)$, which are defined such that
 											\begin{eqnarray} 
 											S(t)&=&|\mathcal{S}_0| S_1(t)^{|\mathcal{S}_0|}S_2(t)^{|\mathcal{I}_0|},  
 											\label{laS}
 											\end{eqnarray}
 											where
 											\begin{eqnarray}        \label{S1}
 											\dot{S_1}(t)&=&-\beta S_1(t) i(t), \qquad\;\, S_1(0)=1, \\
 											\dot{S_2}(t)&=&-\beta S_2(t) {\rm e}^{- \gamma t}, \qquad S_2(0)=1.  
 											\label{S2}
 											\end{eqnarray}
 											(Differentiating~\eqref{laS} with respect to $t$ and substituting from~\eqref{Si2},~\eqref{S1} and~\eqref{S2}, shows that~\eqref{s} is satisfied.) 
 											Note that $S_1(t)$ and $S_2(t)$ are strictly decreasing from 1 and greater than 0. Thus, if $F_1(t) > S_1(t)$ and $F_2(t) > S_2(t)$ for all $t > 0$, then Lemma \ref{lemma2} must hold (compare (\ref{purestate}) and (\ref{laS})).

 											Immediately, we can solve~\eqref{S2} for $S_2(t)$, yielding
 											\begin{eqnarray}  \label{S2t}
 											S_2(t)&=&{\rm e}^{\textstyle{\frac{\beta}{\gamma}({\rm e}^{- \gamma t}-1)}},
 											\end{eqnarray}
 											which allows us to alternatively express its time derivative as
 											\begin{eqnarray} 
 											\dot{S_2}(t)&=&-\gamma \Big[ S_2(t) \ln S_2(t) \Big]    - \beta S_2(t). 
 											\label{S2deriv}
 											\end{eqnarray}
 											The time derivative for $F_2(t)$ can be similarly expressed as
 											\begin{equation} \dot{F_2}(t)= -\gamma \Big[ F_2(t)-1\Big] - \beta F_2(t). \label{F2deriv} \end{equation}
 											Thus, since $S_2(t)$ and $F_2(t)$ are strictly decreasing from 1 and greater than 0, and since $0 > x \ln x > x-1$ for $x \in (0,1)$, then (\ref{S2deriv}) and (\ref{F2deriv}) show that $S_2(t)=F_2(t)$ implies $\dot{S_2}(t)< \dot{F_2}(t)$, whence 
 											\begin{equation} \label{F_2>S_2} F_2(t) > S_2(t) \qquad \mbox{for all  } t>0. \end{equation}

 											Dividing (\ref{Idet}) by (\ref{Sdet}) and using separation of variables yields, after invoking the initial conditions
 											$S(0)=|\mathcal{S}_0|$ and $I(0)=|\mathcal{I}_0|$, that
 											\begin{equation} \label{I}
 											I(t)=N-S(t)+ \frac{\gamma}{\beta} \ln \frac{S(t)}{|\mathcal{S}_0|},
 											\end{equation}
 											which on substituting into~\eqref{Sdet} gives
 											\begin{equation}  \nonumber
 											\dot{S}(t)=- \beta S(t)\Bigg[N-S(t)+ \frac{\gamma}{\beta} \ln \frac{S(t)}{|\mathcal{S}_0|} \Bigg]. 
 											\end{equation}
 											We now take (\ref{S1}), and substitute from (\ref{Ii2}) and (\ref{I}) to obtain
 											\begin{eqnarray} \nonumber
 											\dot{S_1}(t)&=&- \frac{\beta S_1(t)}{  |\mathcal{S}_0| } \Bigg[ N- S(t) + \frac{\gamma}{\beta} \ln \frac{S(t)}{|\mathcal{S}_0|} - |\mathcal{I}_0| {\rm e}^{- \gamma t} \Bigg] \\ \nonumber
 											&=& - \frac{\beta S_1(t)}{  |\mathcal{S}_0| }     \Bigg[ N - |\mathcal{S}_0|S_1(t)^{|\mathcal{S}_0|}S_2(t)^{|\mathcal{I}_0|} \\  
 											&&+\frac{\gamma}{\beta} \ln \Big( S_1(t)^{|\mathcal{S}_0|}S_2(t)^{|\mathcal{I}_0|} \Big) - |\mathcal{I}_0| {\rm e}^{- \gamma t}    \Bigg]   ,\label{S1dot}
 											\end{eqnarray}
 											using~\eqref{laS}.
 											Now, $\ln S_2(t)=(\beta / \gamma)({\rm e}^{- \gamma t}-1)$ and $|\mathcal{S}_0|=N-|\mathcal{I}_0|$, so~\eqref{S1dot} simplifies to
 											\begin{eqnarray}
 											\dot{S_1}(t)&=&- \gamma \Big[ S_1(t) \ln S_1(t) \Big] - \beta S_1(t) + \beta \Big[ S_1(t)^{|\mathcal{S}_0|+1} S_2(t)^{|\mathcal{I}_0|} \Big]. 
 											\label{S1deriv}
 											\end{eqnarray}
 											Substituting $u=t-\tau$ in the integral in~\eqref{F1}, so that $t$ may be taken out of the integrand, the time derivative for $F_1(t)$ can be expressed similarly as
 											\begin{eqnarray}
 											\dot{F_1}(t)&=&- \gamma \Big[  F_1(t)-1 \Big] - \beta F_1(t) + \beta \Big[ F_1(t)^{|\mathcal{S}_0|-2} F_2(t)^{|\mathcal{I}_0|} \Big].
 											\label{F1deriv}
 											\end{eqnarray}
 											Note that, since $F_1(0)=1$ and $F_2(t) \in [0,1]$ for all $t \ge 0$, then $F_1(t) \in [0,1]$ for all $t \ge 0$ (consider the possible values of the right-hand side of (\ref{F1deriv}) when $F_1(t)=0,1$). Thus, since (i) $0 > x \ln x > x-1$ for $x \in (0,1)$, (ii) $F_2(t) > S_2(t) \, \, \mbox{for all  } t>0$ and (iii) $|\mathcal{S}_0| \ge 2$, (\ref{S1deriv}) and (\ref{F1deriv}) show that $S_1(t)=F_1(t)$ implies $\dot{S_1}(t)<\dot{F_1}(t)$, whence
 											\begin{equation}  \nonumber
 											F_1(t) > S_1(t) \qquad \mbox{for all  } t>0,
 											\end{equation}
 											and indeed Lemma \ref{lemma2} must hold. 
 											
 											\end{proof}
 											
 											\begin{remark}
 												\label{remark:S(0)=1}
 												Note that when $|\mathcal{S}_0|=1$, ${\rm E}[X(t)]=F_2(t)^{|\mathcal{I}_0|}$ and $S(t)=S_1(t) S_2(t)^{|\mathcal{I}_0|}$, so in this case Theorem~\ref{theorem1} follows immediately from~\eqref{F_2>S_2}, since $S_1(t) \in [0,1]$ for all $t \ge 0$.
 												\end{remark}
 												
 												\begin{remark}
 													\label{remark:Sinfty}
 													Letting $t \to \infty$ in~\eqref{F2} and~\eqref{S2t} yields $F_2(\infty)=\gamma/(\gamma+\beta)$ and $S_2(\infty)={\rm e}^{- \beta / \gamma}$, whence $F_2(\infty)>S_2(\infty)$.  Further, letting $t \to \infty$ in~\eqref{I} shows that $S(\infty)>0$, whence $F_1(\infty)>0$ and $S_1(\infty)>0$.  Letting $t \to \infty$ in~\eqref{purestate} and~\eqref{laS} now shows that
 													$S_{\text{mes}}(\infty) > S(\infty)$, so the size of the message passing epidemic is strictly less than that of the corresponding deterministic SIR epidemic.
 													\end{remark}
 													
 													\begin{remark}
 														It can also be shown, by entirely analogous means, that the stochastic `carrier-borne' epidemic of Downton (1968) is underestimated by its deterministic counterpart, in terms of the expected number susceptible at time $t$. This model is equivalent to the general stochastic SIR epidemic, except that when a susceptible individual receives an infectious contact it becomes infected (a carrier) independently with probability $\pi$, and otherwise becomes immediately recovered. The standard deterministic version, for tracking the number of susceptibles and the number of carriers, is obtained from (\ref{Sdet}) and (\ref{Idet}), but with the first term on the right-hand side of (\ref{Idet}) multiplied by $\pi$. 
 														
 														We note that the probability of an initially susceptible individual still being susceptible at time $t$, in the stochastic carrier-borne model, is the same as in the general stochastic SIR epidemic when it is modified such that every individual which is not initially infected is independently vaccinated with probability $1- \pi$ and initially susceptible otherwise (cf.~Ball (1990)). The message passing equations for such initial conditions were considered by Wilkinson and Sharkey (2014) and shown to provide a lower bound on the expected number susceptible at time $t$. The message passing equations for the carrier-borne epidemic are obtained from (\ref{purestate})-(\ref{message}), but with the integral in (\ref{F1}) multiplied by $\pi$.
 														\end{remark}

 														\subsection{Bounding the expected number recovered}
 														
 														It is now straightforward to show that the deterministic SIR system overestimates the expected number recovered at all positive time points in the general stochastic SIR epidemic. We state this result in the following theorem.

 														\begin{theorem}
 															For the same initial conditions and parameters,
 															\begin{equation}   \nonumber
 															{\rm E}[Z(t)]   <  R(t)   \qquad \mbox{for all  } t>0.   
 															\end{equation}
 															\label{theorem2}
 															\end{theorem}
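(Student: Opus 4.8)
The plan is to exploit the fact that, in both the stochastic and the deterministic models, recovered individuals accumulate at rate $\gamma$ times the number currently infected, and then to convert the comparison of \emph{recovered} counts into the already-established comparison of \emph{susceptible} counts (Theorem~\ref{theorem1}) by means of an integrating factor.

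First I would record the two rate identities. From the recovery transition $(x,y)\to(x,y-1)$ in Table~\ref{table2}, the Kolmogorov forward equations for the finite-state Markov chain give $\frac{{\rm d}}{{\rm d}t}{\rm E}[Z(t)]=\gamma\,{\rm E}[Y(t)]$, while (\ref{kerm}) gives $\dot R(t)=\gamma I(t)$; both quantities vanish at $t=0$. I would also invoke the conservation laws ${\rm E}[X(t)]+{\rm E}[Y(t)]+{\rm E}[Z(t)]=N$ and $S(t)+I(t)+R(t)=N$, which allow me to eliminate the infected counts in favour of susceptible and recovered counts.

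Next, set $G(t)=R(t)-{\rm E}[Z(t)]$, so that $G(0)=0$ and $\dot G(t)=\gamma\big(I(t)-{\rm E}[Y(t)]\big)$. Substituting the two conservation laws turns the bracketed term into $\big({\rm E}[X(t)]-S(t)\big)-G(t)$, so $G$ satisfies the linear ODE $\dot G(t)+\gamma G(t)=\gamma\big({\rm E}[X(t)]-S(t)\big)$. Multiplying by the integrating factor ${\rm e}^{\gamma t}$ and integrating from $0$ yields
\[
{\rm e}^{\gamma t}G(t)=\gamma\int_0^t {\rm e}^{\gamma u}\big({\rm E}[X(u)]-S(u)\big)\,{\rm d}u .
\]
By Theorem~\ref{theorem1} the integrand is strictly positive for every $u>0$, hence the right-hand side is strictly positive for each $t>0$; therefore $G(t)>0$, which is precisely ${\rm E}[Z(t)]<R(t)$.

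The only step requiring care is the derivation of $\frac{{\rm d}}{{\rm d}t}{\rm E}[Z(t)]=\gamma\,{\rm E}[Y(t)]$ from the forward equations, but since the state space is finite (all variables are bounded by $N$) this is routine and the expectations are differentiable with no convergence concerns, so I expect no genuine obstacle: the argument parallels the susceptible bound and reduces the recovered comparison to Theorem~\ref{theorem1}. If one prefers the two-stage structure mirroring Lemmas~\ref{lemma1} and~\ref{lemma2}, the same integrating-factor computation applied first to $R_{\text{mes}}(t)-{\rm E}[Z(t)]$ (using Lemma~\ref{lemma1}) and then to $R(t)-R_{\text{mes}}(t)$ (using Lemma~\ref{lemma2}) yields ${\rm E}[Z(t)]\le R_{\text{mes}}(t)<R(t)$, recovering the chain of increasingly overestimating systems described above.
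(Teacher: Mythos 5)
Your proposal is correct and follows essentially the same route as the paper: both derive $\dot{{\rm E}}[Z(t)]=\gamma{\rm E}[Y(t)]$ and $\dot R(t)=\gamma I(t)$, eliminate the infected counts via the conservation laws, and solve the resulting linear ODEs with the integrating factor ${\rm e}^{\gamma t}$ so that Theorem~\ref{theorem1} gives the strict inequality. The only cosmetic difference is that you apply the integrating factor once to the difference $G(t)=R(t)-{\rm E}[Z(t)]$, whereas the paper writes the two convolution representations \eqref{Z1} and \eqref{Z3} separately and compares them.
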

 															\begin{proof}
 																It is straightforward to show using the Kolmogorov forward equation that for the general stochastic SIR epidemic,
 																\begin{equation}  \nonumber
 																\dot{{\rm E}}[Z(t)]= \gamma {\rm E}[Y(t)],
 																\end{equation}
 																and recall from~\eqref{kerm} that for the deterministic SIR system,
 																\begin{equation}  \nonumber
 																\dot{R}(t)=\gamma I(t).
 																\label{R3}
 																\end{equation}
 																It is also straightforward that
 																\begin{equation}  \nonumber
 																{\rm E}[X(t)]+{\rm E}[Y(t)]+{\rm E}[Z(t)]=N
 																\end{equation}
 																and  
 																\begin{equation}  \nonumber
 																S(t)+I(t)+R(t)=N.
 																\end{equation}
 																Therefore, assuming ${\rm E}[Z(0)]=R(0)=0$,  
 																\begin{eqnarray}
 																{\rm E}[Z(t)]&=& \int_0^t \gamma {\rm e}^{- \gamma \tau}\Big(N-{\rm E}[X(t-\tau)] \Big) \mbox{d} \tau 
 																\label{Z1}
 																\end{eqnarray}
 																and
 																\begin{eqnarray}
 																R(t)&=& \int_0^t \gamma {\rm e}^{- \gamma \tau}\Big(N-S(t-\tau) \Big) \mbox{d} \tau. 
 																\label{Z3}
 																\end{eqnarray}
 																Since we know from Theorem 1 that the expected number susceptible is underestimated by the deterministic SIR system (at all positive time points) then (\ref{Z1}) and (\ref{Z3}) imply that the expected number recovered must be overestimated.  
 																\end{proof}
 																\begin{remark}
 																	A similar argument shows that ${\rm E}[Z(t)] \le R_{\text{mes}}(t) < R(t)$ for all $t>0$.
 																	\end{remark}

 																	\section{Discussion}
 																	\label{disc}
 																	
 																	By applying the the recently developed message passing approach to SIR epidemics to complete graphs (Karrer and Newman (2010)), we have shown that the Kermack-McKendrick SIR model with Poisson transmission and recovery processes produces rigorous bounds for the general stochastic SIR epidemic, as defined in Bailey (1975). Specifically, the deterministic system~\eqref{Sdet}-\eqref{kerm} underestimates the expected number of susceptibles and overestimates the number of recoveries. Equivalent bounds also apply to the `carrier-borne' epidemic model of Downton (1968). 
 																	
 																	Although, at any time $t>0$ and at the end of an epidemic, the mean number of susceptibles in the general stochastic SIR epidemic is strictly larger than the number of suscepibles in the corresponding deterministic epidemic, the law of large numbers for density dependent population processes (Ethier and Kurtz (1986), Theorem 11.2.1) implies that the difference is small, relative to the population size, when both the population and the initial number of infectives are large; the law of large numbers requires that the fraction initially infected tends to a strictly positive number as $N \to \infty$.  (The law of large numbers assumes that the infection rate $\beta$ depends on the population size $N$, say $\beta=\beta_N$, and that $\beta_N N$ tends to a strictly positive finite limit as $N \to \infty$.) 
If instead the initial number of infectives is held fixed and the epidemic is above threshold (i.e.~$\lim_{N \to \infty} \beta_N N > \gamma$), then, in the limit as $N \to \infty$, the deterministic model represents the expected behaviour, after a random time translation, of epidemics that take off  (Barbour and Reinert 2013).  In these circumstances, unless the fixed initial number of infectives is sufficiently large or the epidemic is well above threshold, the deterministic epidemic overestimates appreciably the expected fraction of the population that is ultimately recovered in the stochastic epidemic, even in the limit as $N \to \infty$, since the latter includes a contribution from the non-neglibible proportion of epidemics that do not take off.

The message passing representation falls between the expected behaviour of the general stochastic epidemic and the deterministic SIR system. Specifically, equation \eqref{SMDorder} implies that, for any $t>0$, $S_{\text{mes}}(t)$ gives a closer \emph{approximation} than $S(t)$ to $E[X(t)]$,  with both being underestimates, and that, under the above asymptotic regime, $S_{\text{mes}}(t)/N$ converges to the same deterministic limit as $E[X(t)/N]$.

 																	An interesting development of our work would be to show that the non-Poisson form of the Kermack-McKendrick model also provides bounds on the corresponding stochastic process. Note that such an extension includes SEIR (Susceptible $\to$ Exposed $\to$ Infected $\to$ Recovered) models.  Another extension worthy of investigation is to multitype SIR epidemics.  
 																	
 																	\appendix
 																	
 																	\section{Message passing equations for a class of Markovian epidemics on finite graphs}
 																	
 																	We consider a stochastic SIR epidemic on an undirected simple graph having finite vertex set $\mathcal{V}$.
 																	The disease dynamics are the same as those described in Section~\ref{sec:intro} for the general stochastic epidemic, except now if individual $i \in \mathcal{V}$ becomes infected it makes infectious contacts only to individuals in $\mathcal{N}_i$, the set of neighbours of $i$ in the graph.  The general stochastic epidemic is obtained by taking the graph to be the complete graph on $N$ individuals.
 															                We assume a non-random initial state, in which each individual is initially either susceptible or infected; for $i \in \mathcal{V}$, we set $z_i=1$ if $i$ is initially susceptible and $z_i=0$ otherwise.
 															                We outline below the message passing equations for this model and the proof that they overestimate the expected spread of infection.  The model is a special case of that studied in Wilkinson and Sharkey (2014), which allowed for heterogeneous and non-Markovian individual-level processes, and more general (possibly random) initial conditions.  The message passing approach was developed by Karrer and Newman (2010), within the framework of a model with non-Markovian disease dynamics, in which the initial states of individuals are independent and identically distributed random variables.

 																	Message passing relies on the concept of the `cavity state' in order to simplify calculations. An individual is placed into the cavity state by cancelling its ability to make contacts. This does not affect its fate (it only affects the fates of others). However, it means that the probability of an individual being susceptible at time $t$ is equal to the probability that, when it is in the cavity state, it is initially susceptible and does not receive an infectious contact from any of its neighbours by time $t$. 
 																	
 																	For arbitrary $i \in \mathcal{V}$ and neighbour $j \in \mathcal{N}_i$, let $H^{i \leftarrow j}(t)$ denote the probability that $i$, when in the cavity state, does not receive an infectious contact from $j$ by time $t$. We can now write
 																	\begin{equation} \nonumber
 																	H^{i \leftarrow j}(t)=1- \int_0^t \beta {\rm e}^{-( \beta + \gamma)\tau}  \big(1-z_j \Phi_i^j(t-\tau)  \big) \mbox{d} \tau,
 																	\end{equation} 
 																	where $\Phi_i^j(t)$ is the probability that $j$ does not receive any infectious contacts by time $t$ when $i$ and $j$ are both in the cavity state.  (The probability that $j$ makes an infectious contact to $i$ during the time interval $[t,t+\Delta t)$, where time is measured from the moment $j$ becomes infected, is given by $\beta {\rm e}^{-( \beta + \gamma) t} \Delta t +o(\Delta t)$ as $\Delta t \to 0$.)

 																	By the arguments of Karrer and Newman (2010), and Wilkinson and Sharkey (2014), it can be shown that
 																	\begin{eqnarray}
 																	\label{PS_eq}
 																	P_{S_i}(t)  &\ge& z_i \prod_{j \in \mathcal{N}_i } H^{i \leftarrow j}(t) \qquad (i \in \mathcal{V}), 
 																	\label{PS_ineq} 
 																	\end{eqnarray}
 																	where $P_{S_i}(t)$ is the probability that $i$ is susceptible at time $t$, and 
 																	\begin{eqnarray} \nonumber
 																	H^{i \leftarrow j}(t) & \ge & 1- \int_0^t \beta {\rm e}^{-( \beta + \gamma)\tau} \Big(1-z_j \prod_{k \in \mathcal{N}_j \setminus i} H^{j \leftarrow k}(t-\tau)  \Big) \mbox{d} \tau\qquad (i \in \mathcal{V}, j \in \mathcal{N}_i) . \\
 																	\label{h_eqn}
 																	\end{eqnarray} 
 																	Inequalities (\ref{PS_ineq}) and (\ref{h_eqn}) essentially follow from the fact that an individual receiving no infectious contacts from one subset of its neighbours is positively correlated with it receiving no infectious contacts from a different subset.
 																	
 																	We now state the definition of $F^{i \leftarrow j}(t)$ (cf.~equation (\ref{firstF})), noting that it satisfies equality in (\ref{h_eqn}):
 																	\begin{equation} \label{Fappend}
 																	F^{i \leftarrow j}(t) = 1- \int_0^t \beta {\rm e}^{-( \beta + \gamma)\tau} \Big(1-z_j \prod_{k \in \mathcal{N}_j \setminus i} F^{j \leftarrow k}(t-\tau)  \Big) \mbox{d} \tau \qquad (i \in \mathcal{V}, j \in \mathcal{N}_i).
 																	\end{equation}
 																	The unique solution to (\ref{Fappend}) can be obtained via the system of ordinary differential equations:
 																	\begin{eqnarray} \nonumber
 																	\dot{F}^{i \leftarrow j}(t) &=& \gamma \Big( 1- F^{i \leftarrow j}(t) \Big) - \beta \Big( F^{i \leftarrow j}(t)-  z_j \prod_{k \in \mathcal{N}_j \setminus i } F^{j \leftarrow k}(t) \Big)\qquad (i \in \mathcal{V}, j \in \mathcal{N}_i). 
 																	\end{eqnarray}
 																	
 																	Reproducing an argument from Karrer and Newman (2010), we can also construct the solution to (\ref{Fappend}) as follows.
 																	Let $F_{(0)}^{i \leftarrow j}(t)=H^{i \leftarrow j}(t)$ for all $i \in \mathcal{V}, j \in \mathcal{N}_i$ and all $t \ge 0$, and define the following iterative procedure: for $m=1,2,\ldots$,
 																	\begin{equation*} 
 																	F_{(m)}^{i \leftarrow j}(t) = 1- \int_0^t \beta {\rm e}^{-( \beta + \gamma)\tau} \Big(1-z_j \prod_{k \in \mathcal{N}_j \setminus i} F_{(m-1)}^{j \leftarrow k}(t-\tau)  \Big) \mbox{d} \tau \qquad (i \in \mathcal{V}, j \in \mathcal{N}_i).
 																	\end{equation*}
 																	It is easily shown, using (\ref{h_eqn}), that $H^{i \leftarrow j}(t) \ge F_{(m)}^{i \leftarrow j}(t) \ge F_{(m+1)}^{i \leftarrow j}(t)\ge 1- \int_0^t \beta {\rm e}^{-( \beta + \gamma)\tau} \mbox{d} \tau$, for all $i \in \mathcal{V}, j \in \mathcal{N}_i$ and all $t \ge 0$ and $m=0,1,\dots$, whence $ \lim_{m \to \infty}F_{(m)}^{i \leftarrow j}(t)$ is the solution of~\eqref{Fappend} and
 																	\begin{equation}\label{fin} H^{i \leftarrow j}(t) \ge  F^{i \leftarrow j}(t) \ge 1- \int_0^t \beta {\rm e}^{-( \beta + \gamma)\tau} \mbox{d} \tau\qquad (i \in \mathcal{V}, j \in \mathcal{N}_i). \end{equation}

 																	Thus from~\eqref{PS_ineq} and (\ref{fin}) we have
 																	\begin{equation}
 																	\label{PS_ibound}
 																	P_{S_i}(t)  \ge z_i \prod_{j \in \mathcal{N}_i } H^{i \leftarrow j}(t) \ge z_i \prod_{j \in \mathcal{N}_i } F^{i \leftarrow j}(t)  \qquad
 																	(i \in \mathcal{V})
 																	\end{equation}
 																	
 																	Let $X(t)$ denote the number of susceptible individuals at time $t$.  Then ${\rm E}[X(t)]=\sum_{i \in \mathcal{V}} P_{S_i}(t) $, so, recalling~\eqref{original},~\eqref{PS_ibound} implies~\eqref{XgtSmes}.

         \acks
         R.R.W. acknowledges support from EPSRC (DTA studentship). R.R.W. and K.J.S. acknowledge support from the Leverhulme Trust (RPG-2014-341).


        \end{document}